\documentclass[letterpaper, 10pt, conference]{ieeeconf}
\IEEEoverridecommandlockouts
\usepackage{graphics}
\usepackage{epsfig}
\usepackage{times}
\usepackage{amsmath,amssymb,amsfonts}
\usepackage{mathrsfs}
\usepackage{algorithm}
\usepackage{algpseudocode}
\usepackage[utf8]{inputenc}
\usepackage{xcolor}

\newtheorem{theorem}{Theorem}
\newtheorem{remark}{Remark}

\title{\LARGE \bf
Data-Driven Predictive Control for Stochastic Descriptor Systems: An Innovation-Based Approach Handling Non-Causal Dependencies}

\author{Yunxiang Ma, Yibo Wang, Zhongmei Li, and Chao Shang,~\IEEEmembership{Member,~IEEE} 
\thanks{This work was supported by the National Natural Science Foundation of China under Grants 62373211 and 62327807.}%
\thanks{Yunxiang Ma, Yibo Wang, and Chao Shang are with the Department of Automation, Tsinghua University, Beijing 100084, China. Email: {\tt\small \{mayx23,wyb21\}@mails.tsinghua.edu.cn, c-shang@tsinghua.edu.cn}}%
\thanks{Zhongmei Li is with State Key Laboratory of Industrial Control Technology, Ministry of Education, East China University of Science and Technology, Shanghai 200237, China. Email: {\tt\small zhongmeili@ecust.edu.cn}}
}

\begin{document}

\maketitle
\thispagestyle{empty}
\pagestyle{empty}

\begin{abstract}
Descriptor systems arise naturally in real-world applications governed by algebraic constraints, such as power networks, robotics and chemical processes. When a descriptor model contains a nontrivial nilpotent block, the discrete-time input--output map may be improper: the current output depends on future inputs and, in the stochastic case, on future noise terms. This letter proposes a data-driven predictive control framework for stochastic descriptor systems that handles these non-causal dependencies without explicitly identifying system matrices. The key idea is to split fast subsystem into noise-driven and input-driven parts, and then combine the former with the slow subsystem such that an innovation-driven Kalman filter can be appropriately defined to reformulate the stochastic descriptor system into an innovation-driven form. Based on this, a new behavioral system representation is derived, which inspires a data-driven innovation-based multi-step output predictor and a practical Inno-DeePC algorithm that enables data-driven predictive control design without known system matrices while implicitly handling algebraic constraints. Numerical experiments on a DC microgrid demonstrate the effectiveness of the proposed approach.
\end{abstract}

\section{Introduction}

Descriptor systems provide a natural framework for modeling  real-life physical systems in which dynamic laws coexist with algebraic constraints, such as conservation laws in power networks, contact forces in robotics, and mass balances in chemical processes~\cite{luenberger1977dynamic,dai1989singular}. Unlike standard state-space models admitting proper transfer functions, descriptor systems, also known as differential algebraic equations (DAEs), can lead to improper input--output dependencies. After discretization, this structure appears as dependence between current output and future input or noise realizations. Such a non-causal dependency of the descriptor representation is \textit{not} a violation of physical causality, but is meaningful in local subsystems of large-scale networks, where algebraic coupling and port choices can expose improper channels~\cite{li2024descriptor}.

Predictive control design of dynamical systems typically hinges on an accurate predictive model, obtained either from first-principles modeling or offline identification. This is rather non-trivial for descriptor systems because in addition to differential dynamics, one must specify algebraic constraints concurrently. For example, in power networks, capacitor and inductor dynamics coexist with Kirchhoff constraints, and certain port choices, capacitor-voltage loops, or inductor-current cutsets can introduce a higher-index nilpotent structure in a local descriptor model~\cite{tischendorf1999topological,estevez2000structural}, making exact first-principles modeling notoriously difficult. Besides, accurately identifying descriptor systems from data remains a significant challenge~\cite{he2023identification}.

The difficulties in modeling real-world complex systems have stipulated the recent prevalence of direct data-enabled predictive control (DeePC) approaches \cite{coulson2019data}. They are primarily built upon the behavioral system theory~\cite{willems2005note}. For standard state-space systems, existing variants of DeePC focus on robustness, stability, stochasticity, noisy data, data informativity, and the connection between direct and indirect formulations~\cite{berberich2021data,breschi2023data,waarde2024behavioral,pan2023stochastic,van2023data,dorfler2023bridging}. For deterministic descriptor systems, tailored DeePC methods have already been put forward in~\cite{schmitz2022willems,schmitz2022case}; however, they do not address stochastic disturbances or measurement noise.


For proper state-space systems, the innovation-based DeePC (Inno-DeePC) was proposed in \cite{wang2024innovation} to handle uncertainty by augmenting inputs with an estimated innovation sequence, giving a novel data-driven output predictor that resembles a Kalman filter (KF). However, it is non-trivial to extend this idea to descriptor systems since the innovation form cannot be defined any longer due to the non-causal dependency. To address this challenge, this letter proposes a new data-driven predictive control framework tailored for stochastic descriptor systems. The main contributions are as follows:
\begin{itemize}
	\item We decompose the quasi-Weierstra\ss{} fast subsystem into noise-driven and input-driven parts, and then combine the former with the slow subsystem such that a steady-state KF can be appropriately defined to reformulate the stochastic descriptor system into an innovation-driven form and make multi-step ahead predictions.
	\item Based on the innovation-driven reformulation, we obtain a new behavioral representation of stochastic descriptor systems, and develop a new data-driven output predictor with innovations as extra inputs. To reasonably estimate innovations from input-output data, a new least-squares strategy is suggested.
    \item A practical predictive control algorithm for stochastic descriptor systems is put forward. Numerical experiments on a simulated DC microgrid validates its efficacy.
\end{itemize}

\emph{Notation:}
For a positive semidefinite matrix $W$, $\|z\|_W^2 \triangleq z^\top W z$. Signal sequences are written as $z_{[a,b]} = \operatorname{col}\{ z(a),\ldots,z(b) \}$, where $\operatorname{col}\{ X_1,\ldots,X_q \} \triangleq [X_1^\top,\ldots,X_q^\top]^\top$. The symbols $I_n$ and $0$ denote the identity matrix of size $n$ and a zero matrix of compatible dimension, respectively. The symbols $\succeq$ and $\succ$ denote positive semidefiniteness and positive definiteness. The Moore--Penrose pseudo-inverse, Kronecker product, and Gaussian distribution are denoted by $(\cdot)^\dagger$, $\otimes$, and $\mathcal{N}(\mu,\Sigma)$, respectively.

\section{Preliminaries and Problem Formulation}

\subsection{System Description and Assumptions}

Consider a discrete-time linear time-invariant (LTI) stochastic descriptor system:
\begin{equation}
	\label{descriptor_system}
	 \left \{ \begin{aligned}
		\mathcal{E} x(k+1) &= A x(k) + B u(k) + w(k), \\
		y(k) &= C x(k) + D u(k) + v(k),
	\end{aligned} \right .
\end{equation}
where $x(k) \in \mathbb{R}^n$, $u(k) \in \mathbb{R}^m$, $y(k) \in \mathbb{R}^p$, and the matrix $\mathcal{E} \in \mathbb{R}^{n \times n}$ may be singular, i.e., $\operatorname{rank}(\mathcal{E}) < n$, which distinguishes descriptor systems from standard state-space models. The noise sequences $w(k) \sim \mathcal{N}(0, Q)$ and $v(k) \sim \mathcal{N}(0, R)$ are assumed to be mutually independent Gaussian white noise with $Q \succeq 0$ and $R \succ 0$. We assume that the matrix pencil $z\mathcal{E} - A$ is regular, i.e., $\det(z\mathcal{E}-A)$ is not the zero polynomial, which ensures the descriptor dynamics are well posed for every consistent initial condition~\cite{dai1989singular}.

By the quasi-Weierstra\ss{} decomposition~\cite{dai1989singular,schmitz2022willems}, there exist nonsingular matrices $S,P\in\mathbb{R}^{n\times n}$ such that
\begin{equation*}
	S\mathcal{E}P=\begin{bmatrix} I_q & 0 \\ 0 & N \end{bmatrix},\qquad
	SAP=\begin{bmatrix} A_1 & 0 \\ 0 & I_r \end{bmatrix},
\end{equation*}
where $N\in\mathbb{R}^{r\times r}$ is nilpotent with index $s$ (i.e., $N^s=0$ and $N^{s-1}\neq 0$) and $q+r=n$. Applying the state coordinate change $z(k)=P^{-1}x(k)=\operatorname{col}\{ z_{\rm slow}(k),z_{\rm fast}(k) \}$ with $z_{\rm slow}\in\mathbb{R}^q$ and $z_{\rm fast}\in\mathbb{R}^r$ and the noise transformation $\bar w(k)=Sw(k)=\operatorname{col}\{ w_1(k),w_2(k) \}$, the stochastic descriptor system \eqref{descriptor_system} decouples into a slow subsystem with states $z_{\rm slow}$ of dimension $q$ and a fast algebraic subsystem with states $z_{\rm fast}$ of dimension $r$. Then we further split the fast states $z_{\rm fast}(k)$ into two components, i.e. $z_{\rm fast}(k)=z_2(k)+z_3(k)$, where $z_2(k)$ and $z_3(k)$ are driven by noise and inputs, respectively. Letting $z_1(k) = z_{\rm slow}(k)$, then the stochastic descriptor system \eqref{descriptor_system} can be equivalently transformed into:
\begin{equation}
	\label{eq_three_part_descriptor}
    \left \{
	\begin{aligned}
		&z_1(k+1) = A_1 z_1(k) + B_1 u(k) + w_1(k),\\
		&N z_2(k+1) = z_2(k) + w_2(k),\\
		&N z_3(k+1) = z_3(k) + B_3 u(k),\\
		&y(k) = C_1 z_1(k)+C_2 z_2(k)+C_3 z_3(k)+D u(k)+v(k),
	\end{aligned} \right .
\end{equation}
where $C_2 = C_3$. By recursive substitution and using $N^s=0$, we have
\begin{equation}
	\label{eq_fast_solution}
	z_2(k)=-\sum_{i=0}^{s-1}N^i w_2(k+i),
	z_3(k)=-\sum_{i=0}^{s-1}N^i B_3 u(k+i),
\end{equation}
so there exist non-causal dependencies between $z_2(k)$ and future noise realizations, and between $z_3(k)$ and future inputs.

\begin{remark}
	For descriptor systems with $s<2$ or $r=0$, the input--output map has no non-causal dependency: any algebraic contribution, if present, involves only current input or noise terms, and thus can be tackled using generic DeePC methods for standard LTI systems. For descriptor systems with $s\geq 2$, however, $z_2(k)$ and $z_3(k)$ are intrinsically relevant to future noise and future input terms, respectively. This is the descriptor non-causal dependency to be tackled in this letter.
\end{remark}

\subsection{Data-Driven Trajectory Characterization}

We next revisit the data-driven representation of the descriptor system~\eqref{descriptor_system} in the noise-free case, i.e., $w(k)=0$ and $v(k)=0$. This hinges on the so-called R-controllability~\cite{dai1989singular}, where ``R'' refers to reachability of the finite dynamics. In the quasi-Weierstra\ss{} coordinates, the noise-free counterpart of system~\eqref{descriptor_system} is said to be R-controllable if
\begin{equation}
	\label{eq_R_ctrl_pencil}
	\operatorname{rank}\bigl([B_1,\ A_1 B_1,\ \ldots,\ A_1^{q-1} B_1]\bigr) = q.
\end{equation}
Based on this, the following result states that all admissible input--output trajectories of a deterministic descriptor system can be expressed using a single sufficiently informative trajectory, without explicitly knowing $\{E,A,B,C,D\}$.
\begin{theorem}
	\label{thm_FL_descriptor}

    \textit{(Behavioral representation of noise-free descriptor system~\cite{schmitz2022willems})}
	Consider the \emph{deterministic} (noise-free) counterpart of system~\eqref{descriptor_system} with $w(k)=0$ and $v(k)=0$, and assume it is regular and R-controllable with a $q$th-order slow subsystem and nilpotency index $s$. Let $\{u^d(k), y^d(k)\}_{k=0}^{T-1}$ be an input--output data trajectory of this noise-free system with $u^d(\cdot)$ persistently exciting of order $L+q+s-1$.\footnote{A $T$-length sequence $u_{[0,T-1]}$ is said to be persistently exciting of order $L$ if its $L$-depth block Hankel matrix $\mathcal{H}_L(u_{[0,T-1]})$ has full row rank.}
	Then $(\bar{u},\bar{y})$ is a valid input--output trajectory of the same noise-free system if and only if there exists $\alpha \in \mathbb{R}^{T-s-L+2}$ such that
    \begin{equation*}
        \begin{bmatrix}
            \mathcal{H}_{L}(u^d_{[0,T-s]}) \\
			\mathcal{H}_{L}(y^d_{[0,T-s]})
        \end{bmatrix}\alpha
        =
        \begin{bmatrix} \bar{u} \\ \bar{y} \end{bmatrix}.
    \end{equation*}

\end{theorem}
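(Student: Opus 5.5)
The plan is to reduce the statement to the fundamental lemma for proper state-space systems, applied to the slow subsystem, while treating the fast part as a finite moving-average (non-causal FIR) map of the input.

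\textbf{Step 1: explicit output formula.} In the noise-free quasi-Weierstra\ss{} coordinates we have $z_2\equiv 0$. Equation \eqref{eq_fast_solution} gives
\begin{equation*}
y(k)=C_1 z_1(k)+Du(k)-\sum_{i=0}^{s-1}C_3N^iB_3u(k+i).
\end{equation*}
Hence a window $(u_{[t,t+L-1]},y_{[t,t+L-1]})$ is determined linearly by three quantities: the initial slow state $z_1(t)$, the inputs $u_{[t,t+L-1]}$, and the $s-1$ look-ahead inputs $u_{[t+L,t+L+s-2]}$. This explains why the Hankel matrices only use data up to $T-s$: every column must have its $s-1$ future inputs available inside the data record. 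Stacking, there is a matrix $M$ such that
\begin{equation*}
\operatorname{col}\{u_{[t,t+L-1]},y_{[t,t+L-1]}\}=M\operatorname{col}\{z_1(t),u_{[t,t+L+s-2]}\},
\end{equation*}
where $M$ has an identity block on the $u$ rows.

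\textbf{Step 2: the ``if'' direction.} Each Hankel column is a valid $L$-window of the system, and the behavior restricted to length $L$ is a linear subspace, since it is the image of $M$. Any combination $\alpha$ of columns is therefore again a trajectory. In this direction we only need to be careful that the windows concatenated in the data include the consistent fast states, which Step 1 already encodes.

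\textbf{Step 3: the ``only if'' direction.} This is the main step. Every valid trajectory is of the form $M\operatorname{col}\{z_1,u_{[0,L+s-2]}\}$ for some slow initial state $z_1\in\mathbb{R}^q$ and some free inputs. It therefore suffices to show that the data matrix
\begin{equation*}
\begin{bmatrix}Z_1\\ \mathcal{H}_{L+s-1}(u^d)\end{bmatrix},\qquad Z_1=[z_1^d(0),\ldots,z_1^d(T-s-L+1)],
\end{equation*}
has full row rank $q+m(L+s-1)$. This is the classical Willems et al.\ rank argument applied to the proper slow system $z_1(k+1)=A_1z_1(k)+B_1u(k)$ with window length $L+s-1$. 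The argument requires reachability of $(A_1,B_1)$, which is R-controllability \eqref{eq_R_ctrl_pencil}, and persistency of excitation of order $(L+s-1)+q$. That matches the hypothesis exactly.

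I expect this rank argument to be the main obstacle. I would reproduce the standard contradiction proof: take a left-kernel vector $(\eta,\xi)$, shift it using the slow recursion, and use Cayley--Hamilton together with reachability to force $\eta=0$ and then $\xi=0$ by persistency of excitation. The point needing care is that the Hankel depth in the PE condition is $L+s-1+q$ rather than $L+q$.

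Given full row rank, every $\operatorname{col}\{z_1,u\}$ equals that matrix times some $\alpha$. Multiplying by $M$ then yields $\operatorname{col}\{\mathcal{H}_L(u^d_{[0,T-s]}),\mathcal{H}_L(y^d_{[0,T-s]})\}\alpha=\operatorname{col}\{\bar u,\bar y\}$, because applying $M$ to the data matrix reproduces exactly these truncated Hankel matrices by Step 1. A final check is that the column count $T-s-L+2$ is consistent with the $s-1$ look-ahead inputs.
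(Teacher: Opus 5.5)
Your proof is correct. The paper gives no proof of this theorem; it quotes it from \cite{schmitz2022willems}, so there is no in-paper argument to compare against. Your reduction is the natural route, and it accounts for both the order $L+q+s-1$ and the truncation to $[0,T-s]$ in the statement. It has two parts:
\begin{itemize}
\item write each $L$-window as a fixed linear image of $(z_1(t),u_{[t,t+L+s-2]})$;
\item obtain full row rank of $\operatorname{col}\{Z_1,\mathcal{H}_{L+s-1}(u^d_{[0,T-1]})\}$ from the classical state--input rank lemma for the reachable pair $(A_1,B_1)$.
\end{itemize}

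\textbf{The rank step needs no extra care.} It is the standard state-space result verbatim, with window length $L+s-1$ and state dimension $q$. The hypothesis is exactly persistency of excitation of order $(L+s-1)+q$ of the full record $u^d_{[0,T-1]}$. The column counts also match, both equal to $T-L-s+2$. You can therefore cite that lemma rather than redo the contradiction argument.

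\textbf{State the definition of ``valid trajectory'' explicitly.} Say that it means an $L$-window of a full solution, so that $z_{\rm fast}(L-1)$ is tied to the $s-1$ look-ahead inputs as in \eqref{eq_fast_solution}. This is what makes ``every valid trajectory lies in the image of your map $M$'' true. Under a finite-horizon reading, the descriptor equation is imposed only for $k\le L-2$ and the terminal fast state is left free. In that case the length-$L$ behavior can be strictly larger than the image of $M$, and the ``only if'' direction can fail.
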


\vspace{0.1in}

\begin{remark}
	The Hankel matrices are built from the truncated sequences $u^d_{[0,T-s]}$ and $y^d_{[0,T-s]}$ because the last $s-1$ samples lack the future-input information required by the fast subsystem. The usual minimum data-length requirement for $L$ is implied by the persistency-of-excitation condition above.
\end{remark}

\section{Innovation-Based Equivalent Reformulation and Inno-DeePC}

In this section, we formalize data-driven predictor and control design for stochastic descriptor systems~\eqref{descriptor_system}. Unlike generic state-space models, descriptor systems with $s>1$ are directly amenable to a KF-based predictor for handling uncertainty. To circumvent this, we make use of \eqref{eq_three_part_descriptor} and decompose the output as $y(k)=y_{12}(k)+y_3(k)$, where
\begin{equation}
	\label{eq_y12_decompose}
	\begin{aligned}
		y_{12}(k)&\triangleq C_1z_1(k)+C_2z_2(k)+v(k),\\
		y_3(k)&\triangleq C_3z_3(k)+D u(k).
	\end{aligned}
\end{equation}
The former describes output variations caused by the slow subsystem, the non-causal dependence on noise, and the measurement noise, whereas the latter captures the non-causal dependence on inputs and the feedthrough term. Note that the direct feedthrough $Du(k)$ is grouped into $y_3(k)$ rather than $y_{12}(k)$. We show that for $y_{12}(k)$ a KF-based predictor can be defined, which enables us to design a novel innovation-based behavioral predictor for the stochastic descriptor system~\eqref{descriptor_system}.

\subsection{Noise-Buffered Kalman Predictor}

Recall that $w_1(k)$ and $w_2(k)$ in~\eqref{eq_three_part_descriptor} are jointly correlated since both originate from $w(k)$ via $\bar w(k)=Sw(k)$. Let $\bar{Q}\triangleq SQS^\top$ denote the covariance of $\operatorname{col}\{ w_1(k),w_2(k) \}$, and let $r_w=\operatorname{rank}(\bar Q)$. Factor $\bar{Q}=\Gamma\Gamma^\top$ with $\Gamma=[\Gamma_1^\top,\Gamma_2^\top]^\top\in\mathbb{R}^{n\times r_w}$, where $\Gamma_1\in\mathbb{R}^{q\times r_w}$ and $\Gamma_2\in\mathbb{R}^{r\times r_w}$, so that $w_1(k)=\Gamma_1\varepsilon(k)$ and $w_2(k)=\Gamma_2\varepsilon(k)$ with $\varepsilon(k)\sim\mathcal{N}(0,I_{r_w})$. Since $z_2(k)$ depends on $\{\Gamma_2 \varepsilon(k),\ldots,\Gamma_2 \varepsilon(k+s-1)\}$, we introduce a noise buffer $\eta(k) \triangleq [\varepsilon(k)^\top,\ldots,\varepsilon(k+s)^\top]^\top \in \mathbb{R}^{r_w(s+1)}$ with shift-register dynamics
\begin{equation*}
	\eta(k+1) = A_\eta \eta(k) + B_\eta \varepsilon(k+s+1),
\end{equation*}
where $A_\eta$ is the block upward-shift matrix and $B_\eta=[0;\ldots;0;I_{r_w}]$. Define $M\triangleq[I_r,N,\ldots,N^{s-1},0]\in\mathbb{R}^{r\times r(s+1)}$ and $S_2\triangleq I_{s+1}\otimes\Gamma_2\in\mathbb{R}^{r(s+1)\times r_w(s+1)}$. It then follows that $z_2(k)=-MS_2\eta(k)$. Consequently, the dynamics of $y_{12}(k)$ is governed by
\begin{equation}
	\label{eq_augmented_system}
	\left \{ \begin{aligned}
		\xi(k+1) &= A_\xi \xi(k)+B_\xi u(k) +G_\xi \varepsilon(k+s+1),\\
		y_{12}(k) &= C_\xi \xi(k)+v(k),
	\end{aligned} \right .
\end{equation}
where $\xi(k)\triangleq[z_1(k)^\top,\eta(k)^\top]^\top\in\mathbb{R}^{n_\xi}$ with $n_\xi=q+r_w(s+1)$, and
\begin{equation*}
	\begin{gathered}
		A_\xi=\begin{bmatrix} A_1 & \Gamma_1 S_0 \\ 0 & A_\eta \end{bmatrix},
		\quad S_0=[I_{r_w},0,\ldots,0],\quad
		B_\xi=\begin{bmatrix} B_1 \\ 0 \end{bmatrix},\\
		G_\xi=\begin{bmatrix} 0 \\ B_\eta \end{bmatrix},\quad
		C_\xi=\begin{bmatrix} C_1 & -C_2 M S_2 \end{bmatrix}.
	\end{gathered}
\end{equation*}
Notice that the future noise term $\varepsilon(k+s+1)$ is independent of $\xi(k)$, $u(k)$, and $v(k)$. Hence, if~\eqref{eq_augmented_system} is minimal, there exists a steady-state KF of the form
\begin{equation}
	\label{eq_inno_form}
	\left \{ \begin{aligned}
		\hat{\xi}(k+1|k) &= A_\xi \hat{\xi}(k|k-1) + B_\xi u(k) + K_p e(k), \\
		\hat{y}_{12}(k) &= C_\xi \hat{\xi}(k|k-1),
	\end{aligned} \right .
\end{equation}
where $K_p$ is the steady-state Kalman gain, $\hat{y}_{12}(k)$ is the one-step prediction of $y_{12}(k)$, and 
\begin{equation}
	\label{eq_inno}
    e(k) = y_{12}(k) - \hat{y}_{12}(k)
\end{equation}
is the innovation, i.e., the one-step prediction error. Due to the Gaussianity of $\varepsilon(k+s+1)$ and $v(k)$, it then follows from~\cite{kailath1968innovations} that $e(k)$ in steady states is a Gaussian white noise sequence with covariance $\Sigma_e=C_\xi P_\infty C_\xi^\top+R$, where $P_\infty$ is the steady-state prediction-error covariance of $\xi(k)$. Given $\hat{\xi}(k+1|k)$ and future inputs $u(k+\tau),~\tau \ge 1$, one can further attain multi-step forward predictions of $y_{12}(k)$:
\begin{equation}
	\label{eq_multi_step}
	\left\{
	\begin{aligned}
		\hat{\xi}(k+\tau+1|k) &= A_\xi \hat{\xi}(k+\tau|k) + B_\xi u(k+\tau), \\
		\hat{y}_{12}(k+\tau|k) &= C_\xi \hat{\xi}(k+\tau|k),
	\end{aligned}
	\right.
\end{equation}
which can be understood as propagating the predictor multiple times with future innovations replaced by their conditional zero mean. 

Leveraging~\eqref{eq_inno_form} and~\eqref{eq_inno}, we can eventually recast the stochastic descriptor system~\eqref{descriptor_system}, equivalently~\eqref{eq_three_part_descriptor}, into an equivalent quasi-Weierstra\ss{} reformulation that is explicitly driven by both the system input $u(k)$ and the innovation $e(k)$:
\begin{equation}
	\label{eq_recast_descriptor}
    \left \{
	\begin{aligned}
		&\hat{\xi}(k+1|k) = A_\xi \hat{\xi}(k|k-1) + B_\xi u(k) + K_p e(k), \\
		&N z_3(k+1) = z_3(k) + B_3 u(k),\\
		&y(k) = C_\xi \hat{\xi}(k|k-1)+C_3 z_3(k)+D u(k)+e(k),
	\end{aligned} \right .
\end{equation}
where $e(k)$ essentially captures the uncertainty due to the noise terms in~\eqref{eq_three_part_descriptor}. The output equation in~\eqref{eq_recast_descriptor} can also be written as
\begin{equation}
    \label{eq_full_output_y12_y3}
    y(k)=\hat{y}_{12}(k)+y_3(k)+e(k).
\end{equation}

\subsection{Innovation-Based Behavioral Representation}

Based upon the innovation-driven equivalent reformulation~\eqref{eq_recast_descriptor}, we now arrive at a new behavioral representation of the stochastic descriptor system~\eqref{descriptor_system}, which extends the deterministic Theorem~\ref{thm_FL_descriptor} to the stochastic setting.

\begin{theorem}[Innovation-Based Fundamental Lemma]
	\label{thm_FL_inno}
	Consider the stochastic descriptor system~\eqref{descriptor_system}, from which an input--output data trajectory $\{u^d(k),y^d(k)\}_{k=0}^{T-1}$ is collected, and let $\{e^d(k)\}_{k=0}^{T-1}$ denote the associated innovation sequence generated by the steady-state KF~\eqref{eq_inno_form}. Suppose that the recast innovation-driven system~\eqref{eq_recast_descriptor} is R-controllable, i.e., the slow subsystem~\eqref{eq_inno_form} is controllable from the augmented input $(u,e)$, and that the combined offline driving signal $\tilde{u}^d(k)\triangleq\operatorname{col}(u^d(k),e^d(k))$ is persistently exciting of order $L+n_\xi+s-1$. Then any valid $L$-length trajectory $(\bar{u},\bar{y},\bar{e})$ of the recast system~\eqref{eq_recast_descriptor} can be expressed as
	\begin{equation}
		\begin{bmatrix}
				U^d \\
				Y^d \\
				E^d
		\end{bmatrix} g
		=
		\begin{bmatrix}
			\bar{u} \\ \bar{y} \\ \bar{e}
		\end{bmatrix}
        \label{eq: 12}
	\end{equation}
	for some $g \in \mathbb{R}^{T-s-L+2}$, where $U^d=\mathcal{H}_{L}(u^d_{[0,T-s]})$, $Y^d=\mathcal{H}_{L}(y^d_{[0,T-s]})$ and $E^d=\mathcal{H}_{L}(e^d_{[0,T-s]})$ are Hankel matrices.
\end{theorem}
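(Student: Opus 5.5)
The plan is to reduce the statement to the deterministic fundamental lemma, Theorem~\ref{thm_FL_descriptor}, applied to the recast system~\eqref{eq_recast_descriptor}. The key observation is that~\eqref{eq_recast_descriptor} is a \emph{deterministic} regular descriptor system once $e(k)$ is treated as an exogenous input. We therefore define the augmented input $\tilde u(k)=\operatorname{col}(u(k),e(k))\in\mathbb{R}^{m+p}$ and stack the recast system in quasi-Weierstra\ss{} form with slow state $\hat\xi(k|k-1)\in\mathbb{R}^{n_\xi}$ and fast state $z_3(k)\in\mathbb{R}^{r}$:
\begin{equation*}
\begin{bmatrix} I_{n_\xi} & 0\\ 0 & N\end{bmatrix}\begin{bmatrix}\hat\xi(k+1|k)\\ z_3(k+1)\end{bmatrix}=\begin{bmatrix} A_\xi & 0\\ 0& I_r\end{bmatrix}\begin{bmatrix}\hat\xi(k|k-1)\\ z_3(k)\end{bmatrix}+\begin{bmatrix} B_\xi & K_p\\ B_3 & 0\end{bmatrix}\tilde u(k).
\end{equation*}
The corresponding output equation is $y(k)=[C_\xi,\ C_3]\operatorname{col}(\hat\xi,z_3)+[D,\ I_p]\tilde u(k)$. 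This pencil is regular, with slow order $n_\xi$ and nilpotent block $N$ of index $s$. Both properties are read off directly from the block-diagonal structure, since $\det(zI-A_\xi)\det(zN-I)$ is a nonzero polynomial.

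Next, we verify the hypotheses of Theorem~\ref{thm_FL_descriptor} for this system. R-controllability in the sense of~\eqref{eq_R_ctrl_pencil} becomes $\operatorname{rank}[\tilde B_1,\ A_\xi\tilde B_1,\ldots,A_\xi^{n_\xi-1}\tilde B_1]=n_\xi$ with $\tilde B_1=[B_\xi,\ K_p]$, which is exactly the assumption stated in the theorem. Persistency of excitation of $\tilde u^d$ of order $L+n_\xi+s-1$ is also assumed; this replaces $q$ by the new slow order $n_\xi$.

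It remains to confirm that the collected data $\{u^d,y^d,e^d\}$ form an exact trajectory of this deterministic system. This follows from the construction leading to~\eqref{eq_recast_descriptor} and~\eqref{eq_full_output_y12_y3}: with $e^d$ defined through the steady-state KF~\eqref{eq_inno_form} and~\eqref{eq_inno}, the identity $y^d=\hat y_{12}+y_3+e^d$ holds pointwise, with no residual noise. Here $z_3$ is determined from the future inputs via~\eqref{eq_fast_solution}, which is why only samples up to $T-s$ are admissible. This is precisely the truncation used in Theorem~\ref{thm_FL_descriptor}.

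Applying Theorem~\ref{thm_FL_descriptor} with input $\tilde u$ then gives $g$ with $\operatorname{col}(\mathcal H_L(\tilde u^d_{[0,T-s]}),Y^d)g=\operatorname{col}(\bar{\tilde u},\bar y)$ for every valid trajectory. Permuting the block rows of $\mathcal H_L(\tilde u^d)$ separates them into $U^d$ and $E^d$, which yields~\eqref{eq: 12}.

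The main obstacle I expect is the step ensuring that the data genuinely satisfy the recast deterministic equations. The KF is only steady-state, so one must either assume the filter was initialized at the stationary covariance or argue that $e^d$ is by definition generated by the steady-state recursion from some initial $\hat\xi(0|-1)$. Since the fundamental lemma permits arbitrary initial conditions, the latter suffices, but it has to be said explicitly. A second subtlety is that the minimality of~\eqref{eq_augmented_system}, which is needed for $K_p$ to exist, is distinct from the R-controllability assumption. I will take the existence of~\eqref{eq_inno_form} as given from the preceding construction rather than re-derive it.
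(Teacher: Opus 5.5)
Your proposal is correct and follows the same route as the paper: you treat the recast system~\eqref{eq_recast_descriptor} as a deterministic quasi-Weierstra\ss{} descriptor system driven by $\tilde u=\operatorname{col}(u,e)$, with slow order $n_\xi$ and nilpotency index $s$, and invoke Theorem~\ref{thm_FL_descriptor} directly. Your additional checks simply spell out details the paper's one-line proof leaves implicit: regularity, the fact that the data satisfy the recast equations exactly once $e^d$ is defined by the steady-state recursion from an arbitrary $\hat\xi(0|-1)$, and the row permutation that separates $U^d$ from $E^d$.
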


\emph{Proof:}
The recast system~\eqref{eq_recast_descriptor} can be viewed as a deterministic descriptor system in quasi-Weierstra\ss{} form, with an $n_\xi$-order slow subsystem and driven by the augmented external input $\tilde{u}(k)=\operatorname{col}\{ u(k),e(k) \}$. Under the stated assumptions, Theorem~\ref{thm_FL_descriptor} applies directly to~\eqref{eq_recast_descriptor} with this augmented input, which yields the claimed Hankel representation. $\hfill\blacksquare$

\begin{remark}
	The persistency-of-excitation order $L+n_\xi+s-1$ in Theorem~\ref{thm_FL_inno} is imposed on the joint driving signal $\tilde u^d(k)$ that stacks $u^d(k)$ and $e^d(k)$. This implies that $u^d(k)$ itself has to be persistently exciting of order $L+n_\xi+s-1$. Because $n_\xi>q$, this requirement on $u^d(k)$ is more demanding than the deterministic case in Theorem~\ref{thm_FL_descriptor}. The corresponding requirement on $e^d(k)$, on the other hand, is mild owing to its white-noise property.
\end{remark}

\begin{remark}[Data-driven innovation estimation] 
\label{rmk_inno_est}
In practice, it is impossible to know the ground truth of the innovation sequence $\{e^d(k)\}_{k=0}^{T-1}$. In~\cite{wang2024innovation}, a non-parametric VARX fitting approach was suggested to estimate innovations of stochastic LTI systems from an input--output data trajectory $\{u^d(k),y^d(k)\}$. However, this approach does not apply straightforwardly to~\eqref{eq_inno_form} since the output $y_{12}(k)$, which is associated with $e(k)$, is in fact unmeasurable. As a remedy, we propose to rely on~\eqref{eq_full_output_y12_y3} to derive an approximate VARX relation:
\begin{equation}
	\label{eq_arx}
	y(k) \approx \sum_{i=1}^{\ell}\Theta_y^{(i)} y(k-i)
	+\sum_{i=-s+1}^{\ell}\Theta_u^{(i)} u(k-i)+e(k),
\end{equation}
	where the $s-1$ future-input regressors $u(k+1),\ldots,u(k+s-1)$ (corresponding to $i=-s+1,\ldots,-1$), which are absent in~\cite{wang2024innovation}, are added to describe the non-causal dependency in $y_3(k)$. The current input $u(k)$ and past inputs $u(k-1),\ldots,u(k-\ell)$ are also included as standard regressors. Here $\ell$ is the autoregressive order, which shall be sufficiently large such that the truncation error $\|(A_\xi-K_p C_\xi)^{\ell}\|$ becomes negligible~\cite{chiuso2007role}. Given an input--output data trajectory $\{u^d(k),y^d(k)\}$, we perform least squares to estimate the coefficients $\{\hat{\Theta}_y^{(i)},\hat{\Theta}_u^{(i)}\}$ in~\eqref{eq_arx}, and then regard the fitting residuals as the estimated innovations $\hat{e}^d(k)$. We emphasize that this least-square estimation procedure does not amount to identifying $\{E,A,B,C,D\}$ from data, since the estimated coefficients $\{\hat{\Theta}_y^{(i)},\hat{\Theta}_u^{(i)}\}$ do not correspond to any particular descriptor realization of $\{E,A,B,C,D\}$.
\end{remark}

\subsection{Inno-DeePC for Stochastic Descriptor Systems}

Having estimated $\{\hat{e}^d(k)\}_{k=0}^{T-s}$, one can construct a Hankel matrix $\hat{E}^d=\mathcal{H}_{L}(\hat{e}^d_{[0,T-s]})$ and insert it into \eqref{eq: 12} as a substitute of $E_d$. In order to predict evolution of system outputs in a future horizon, it is necessary to further split \eqref{eq: 12} into past and future sections. Specifically, three Hankel matrices $U^d$, $Y^d$, $\hat{E}^d$ shall be split as:
\begin{equation}
    U^d = \begin{bmatrix}
        U_p \\ U_f
    \end{bmatrix}, ~Y^d = \begin{bmatrix}
        Y_p \\ Y_f
    \end{bmatrix}, ~\hat{E}^d = \begin{bmatrix}
        \hat{E}_p \\ \hat{E}_f
    \end{bmatrix},
\end{equation}
where $U_p,Y_p,\hat E_p$ are past Hankel matrices with $L_p$ block rows, and $U_f,Y_f,\hat E_f$ are future Hankel matrices with $L_f = L - L_p$ block rows. In principle, the past horizon $L_p$ shall satisfy $L_p \ge n_\xi + s - 1$ in order to uniquely determine the initial condition \cite[Lemma 1]{schmitz2022willems}, whereas the future horizon $L_f$ shall satisfy $L_f\geq s$ so as to supply the $s-1$ terminal input samples required by the descriptor fast subsystem. In this way, we arrive at an approximate version of behavioral relation \eqref{eq: 12}:
\begin{equation}
	\label{eq_FL_true_inno}
	\operatorname{col}\{U_p,U_f,Y_p,Y_f,\hat E_p,\hat E_f\}\,g
	=\operatorname{col}\{u_p,u_f,y_p, y_f,\hat e_p,\hat e_f\},
\end{equation}
where $u_p = u_{[t-L_p,t-1]}$, $u_f = u_{[t,t+L_f-1]}$, and $y_p, y_f,\hat e_p,\hat e_f$ can be defined analogously.
Due to the usage of data-driven estimates $\{ \hat E_p,\hat E_f \}$, the realized innovation trajectories $\hat e_p,\hat e_f$ no longer coincide with $e_p, e_f$ generated from an ideal KF \eqref{eq_inno_form} and are thus distinguished using $\hat{\cdot}$. Following the conditional-mean prediction principle \cite{wang2024innovation} and the spirit of \eqref{eq_multi_step}, we can predict the future output $\hat y_f$ by setting future innovations in \eqref{eq_FL_true_inno} to zero, i.e., $\hat e_f=0$. This gives rise to an implementable innovation-based data-driven multi-step output predictor:
\begin{equation}
	\label{eq_FL_est_inno}
	\operatorname{col}\{U_p,U_f,Y_p,Y_f,\hat E_p,\hat E_f\}\,g
	=\operatorname{col}\{u_p,u_f,y_p,\hat y_f,\hat e_p,0\}.
\end{equation}
To eliminate the equality constraint $\hat E_f g=0$, we apply null-space projection: let $\hat E_f^\bot$ be a matrix whose columns form an orthonormal basis for the null space of $\hat E_f$ (assumed full row rank). Writing $g=\hat E_f^\bot h$ and substituting it into four block rows of~\eqref{eq_FL_est_inno} gives
\begin{equation}
	\label{eq_Pi_h}
	\Pi h = \operatorname{col} \{ u_p,\,u_f,\,y_p,\,\hat e_p \},
\end{equation}
where $\Pi\triangleq\operatorname{col} \{ U_p,\,U_f,\,Y_p,\,\hat E_p \} \hat E_f^\bot$, and $u_f$ is the planned input sequence containing the $s-1$ terminal samples required by the descriptor fast subsystem. To suppress the effect of inaccurate innovation estimation, we take the minimum-norm solution of~\eqref{eq_Pi_h}:
\begin{equation*}
	h_{\mathrm{pinv}}=\Pi^\dagger \operatorname{col}\{ u_p,u_f,y_p,\hat e_p \}.
\end{equation*}
This yields a new multi-step predictor of future output
\begin{equation}
	\label{eq_yf_predictor}
	\hat{y}_f = Y_f \hat{E}_f^\bot\, h_{\mathrm{pinv}},
\end{equation}
which is affine in future control actions $u_f$ since $h_{\mathrm{pinv}}$ depends linearly on $(u_p,y_p,\hat e_p,u_f)$.

Based upon the proposed multi-step predictor \eqref{eq_yf_predictor}, the predictive control design problem can be formulated, e.g. as a convex quadratic program (QP) in $u_f$, with reference $r$, polytopic constraint sets $\mathcal{U}$ and $\mathcal{Y}$, and weights $Q_y\succeq 0$ and $R_u\succ 0$:
\begin{equation}
	\label{eq_deepc}
	\begin{aligned}
		\min_{u_f}\quad & \sum_{k=0}^{L_f-1}\|\hat{y}(t+k)-r(t+k)\|_{Q_y}^2 + \|u(t+k)\|_{R_u}^2 \\
		\mathrm{s.t.}\quad & \hat{y}_{[t,t+L_f-1]} = \hat{y}_f\ \text{given by}\ \eqref{eq_yf_predictor}, \\
		& u(t+k)\in\mathcal{U},~ \hat{y}(t+k)\in\mathcal{Y},~ k=0,\cdots,L_f-1.
	\end{aligned}
\end{equation}
The pseudo-inverse formulation~\eqref{eq_yf_predictor} delivers a minimum-norm predictor consistent with the innovation-based prediction principle in~\cite{wang2024innovation}, thereby reducing the need for the ad-hoc regularization commonly used in standard DeePC under noisy data. After solving~\eqref{eq_deepc}, the receding-horizon input $u^\star(t)$ is applied to the system; the online innovation is updated as the one-step prediction residual $\hat{e}(t)=y(t)-\hat{y}(t)$, and the past innovation window is shifted accordingly.

\begin{remark}
	The data-driven predictive control problem \eqref{eq_deepc} effectively bypasses the first-principle modeling or system identification procedure. It also implicitly encodes the algebraic constraint structure in \eqref{descriptor_system} by means of the behavioral representation \eqref{eq: 12} and the multi-step predictor \eqref{eq_yf_predictor}.
\end{remark}

\begin{remark}[Knowledge of nilpotency index $s$]
	Our Inno-DeePC algorithm requires the nilpotency index $s$ or its upper bound to be known. In power systems, this information is often available from the circuit or interconnection topology, e.g., from capacitor-voltage loops or inductor-current cutsets in electrical networks~\cite{dai1989singular,li2024descriptor}. Also, in robotics and multibody dynamics, when mechanisms are modeled in Cartesian coordinates with holonomic joint or contact constraints, the resulting equations are known to be DAEs with $s = 3$, since the position-level constraints must be differentiated to obtain acceleration-level dynamics and constraint forces~\cite{negrut2005hht}.
    
\end{remark}

Algorithm~\ref{alg_inno_deepc} summarizes the offline/online implementation of our Inno-DeePC method for stochastic descriptor systems. Note that innovation estimation is only performed offline and not needed in online predictive control.

\begin{algorithm}[h]
	\caption{Inno-DeePC for Stochastic Descriptor Systems}
	\label{alg_inno_deepc}
	\begin{algorithmic}[1]
		\Statex \textbf{Input:} Data $\{u^d(k),y^d(k)\}_{k=0}^{T-1}$; upper bound of $s$; horizons $L_p$, $L_f$ with $L_f \geq s$; VARX order $\ell$; weights $Q_y, R_u$.
		\State \textbf{Offline:} Fit~\eqref{eq_arx} by least squares and set the residuals as $\hat e^d(k)$.
		\State \textbf{Offline:} Construct $U_p,U_f,Y_p,Y_f,\hat E_p,\hat E_f$; compute $\hat E_f^\bot$ and $\Pi^\dagger$; verify that $\Pi$ has full column rank.
		\State \textbf{Online:} Form $u_p(t), y_p(t), \hat e_p(t)$ and solve~\eqref{eq_deepc} using the predictor~\eqref{eq_yf_predictor}.
		\State \textbf{Online:} Apply $u^\star(t)$; update $\hat e(t)=y(t)-\hat y(t)$; shift the past-innovation window; and repeat.
	\end{algorithmic}
\end{algorithm}

\section{Numerical Simulation}

We validate the proposed Inno-DeePC algorithm for stochastic descriptor systems on the control task of a 48\,V DC microgrid~\cite{dragicevic2016dc} featuring algebraic constraints and an improper output channel. As discussed in Section~I, the local plant seen by an individual converter can exhibit an improper transfer function due to algebraic coupling, making this a representative testbed for the proposed method.

Consider the four-bus DC microgrid shown in Fig.~\ref{fig_sys}, comprising two voltage source converters (VSC\,1 and VSC\,2), interconnecting cables with series resistance and inductance, two bus capacitors, and a resistive load. The descriptor state is $x=[V_1,V_2,V_3,V_4,i_{12},i_{23},i_{24}]^\top\in\mathbb{R}^7$, the input is $u=[u_1,u_2]^\top\in\mathbb{R}^2$ involving current signals, and the measured output is $y=[V_1,V_3,V_4]^\top\in\mathbb{R}^3$. The main circuit parameters are $C_{\mathrm{bus},1}=2.2$\,mF, $C_{\mathrm{bus},4}=1.5$\,mF, $L_{12}=0.5$\,mH, $L_{23}=0.8$\,mH, $L_{24}=0.6$\,mH, $R_{12}=0.10\,\Omega$, $R_{23}=0.15\,\Omega$, $R_{24}=0.12\,\Omega$, and $R_L=60\,\Omega$.

Kirchhoff's laws yield the continuous-time descriptor model $E\dot{x}=Ax+Bu$, $y=Cx$, where $E=\operatorname{diag}(C_{\mathrm{bus},1},0,0,C_{\mathrm{bus},4},L_{12},L_{23},L_{24})$ is singular with $\operatorname{rank}(E)=5$. Rows~2 and~3 correspond, respectively, to the algebraic Kirchhoff current law (KCL) constraint at Bus\,2 and to the current-source relation $i_{23}=u_2$. The quasi-Weierstra\ss{} decomposition produces a slow subsystem of dimension $q=5$ and a fast subsystem of dimension $r=2$ with nilpotency index $s=2$.

\begin{figure}[h]
	\centering
	\includegraphics[width=0.8\columnwidth, trim=15 15 15 15, clip]{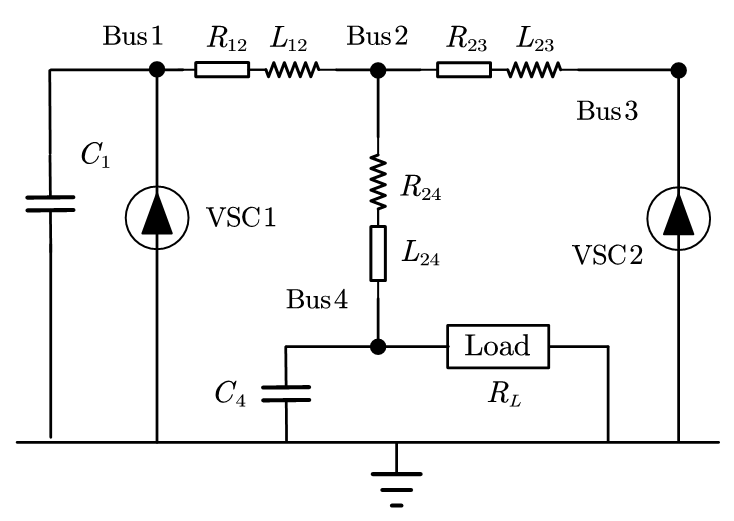}
	\caption{Topology of the 48\,V DC microgrid. VSC\,1 injects a current $u_1$ at Bus 1, while the current source $u_2$ at Bus 3 models a dispatchable distributed generator.}
	\label{fig_sys}
\end{figure}

The second measured output $V_3$ constitutes an improper channel: by Kirchhoff's voltage law (KVL), $V_3=V_2-R_{23}i_{23}-L_{23}\dot{i}_{23}$ depends on $\dot{u}_2$ since $i_{23}=u_2$. After discretization with a first-order hold (FOH) at sampling period $h=0.1$\,s, this manifests as the fast state depending on $u(k+1)$, which requires the VARX regressor to include $s-1=1$ future-input term as prescribed by~\eqref{eq_arx}.

An offline dataset of $T=300$ samples is collected using a persistently exciting input composed of pseudo-random binary sequence signals, sinusoids, and Gaussian noise around the nominal operating point $u^{\mathrm{nom}}=[5.0,2.5]^\top$\,A. Process noise $w(k)\sim\mathcal{N}(0,0.03^2 I_7)$ and measurement noise $v(k)\sim\mathcal{N}(0,0.6^2 I_3)$ yield an output signal-to-noise ratio (SNR) of approximately $33$\,dB. The VARX order is set to $\ell=15$, producing $N_e=284$ effective innovation samples with regressor dimension $n_\phi=79$. Controller horizons are $L_p=12$ and $L_f=21 \geq s=2$), with cost weights $Q_y=I_3$ and $R_u=0.05\,I_2$.

We compare our method against Regularized DeePC (Reg-DeePC)~\cite{coulson2019data} and Subspace Predictive Control (SPC)~\cite{favoreel1999spc}. \textbf{Reg-DeePC} follows~\cite{coulson2019data} with regularization parameter $\lambda_g=50$ and equality constraints enforcing past input--output consistency. \textbf{SPC} constructs a linear predictor via oblique projection of the future output onto the past and future input--output subspaces and solves the resulting unconstrained QP in $u_f$~\cite{favoreel1999spc}. All three controllers share identical cost weights.

The closed-loop experiment spans $150$ control steps ($15$\,s), preceded by a warm-up phase of $L_p=12$ random-input steps. A setpoint change is introduced at $k=82$ ($8.2$\,s): during $k\in[12,81]$ the reference is computed from $u^{\mathrm{nom}}=[5.0,2.5]^\top$\,A, while during $k\in[82,161]$ it switches to the reduced-load operating point $u^{\mathrm{step}}=[4.0,1.8]^\top$\,A. All three controllers are subject to identical noise realizations.

\begin{figure}[h]
	\centering
	\includegraphics[width=0.85\columnwidth]{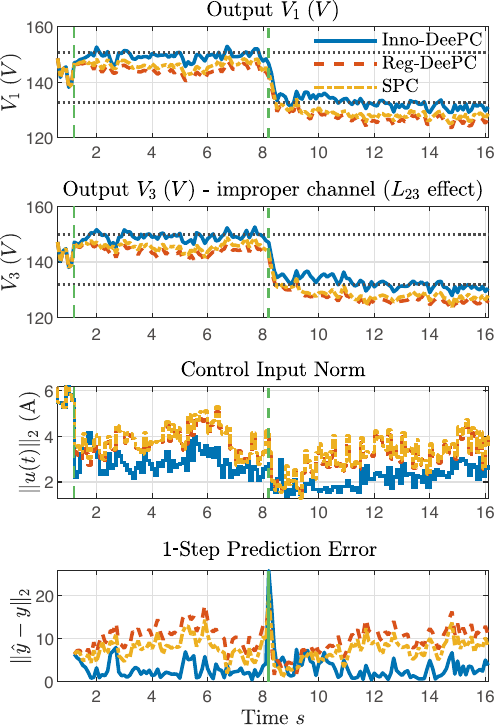}
	\caption{Closed-loop performance comparison on the 48\,V DC microgrid. Dashed black lines denote the setpoints; the first vertical green line marks the end of the warm-up phase, and the second marks the setpoint switch.}
	\label{fig_sim}
\end{figure}

Fig.~\ref{fig_sim} displays the output trajectories, control effort, and prediction errors for all three data-driven control design methods. During the first setpoint phase, Inno-DeePC (solid blue) tracks the reference voltages with the smallest steady-state fluctuation, whereas Reg-DeePC (dashed red) and SPC (dash-dotted yellow) exhibit noticeable tracking errors. After the setpoint switches at $t=8.2$\,s, Inno-DeePC settles to the new reference with the shortest transient among the three methods. Notably, the improper output $V_3$, which depends on $\dot{u}_2$ through the cable inductance $L_{23}$, is well regulated by Inno-DeePC, confirming that the algorithm correctly handles the non-causal output dependency.

The bottom panel of Fig.~\ref{fig_sim} shows that Inno-DeePC consistently attains the lowest prediction error throughout the experiment. Using $R^2=1-\sum_k\|y(k)-\hat y(k)\|^2/\sum_k\|y(k)-\bar y\|^2$, where $\bar y$ is the sample mean, the one-step-ahead prediction scores are $0.917$ (Inno-DeePC), $0.727$ (SPC), and $0.531$ (Reg-DeePC). The substantially higher $R^2$ demonstrates that the innovation-based multi-step predictor~\eqref{eq_yf_predictor} produces improved output predictions by exploiting the statistical structure of the descriptor system noise.

\section{Conclusion}

This letter developed an innovation-based data-driven predictive control framework for stochastic descriptor systems with non-causal dependencies induced by nilpotent descriptor blocks. Technically we decomposed the quasi-Weierstra\ss{} fast subsystem into noise-driven and input-driven parts, and then combined the former with the slow subsystem such that a steady-state KF can be appropriately defined to reformulate the stochastic descriptor system into an innovation-driven form. Based on this, we derived a new behavioral representation of stochastic descriptor systems, and develop a new data-driven output predictor as well as predictive control design scheme with innovations as extra inputs. To reasonably estimate innovations from input-output data, a new least-squares strategy was suggested. Experiments on a DC microgrid demonstrated improved tracking and prediction accuracy relative to Reg-DeePC and SPC. Future work will address closed-loop stability guarantees and extensions to nonlinear descriptor systems.

\bibliographystyle{ieeetr}
\bibliography{references}

\end{document}